\theoremstyle{definition}
\newtheorem{defn}{Definition}[section]
\newtheorem{definition}[defn]{Definition}
\newtheorem{example}[defn]{Example}
\theoremstyle{plain}
\newtheorem{proposition}[defn]{Proposition}
\newtheorem*{theorem*}{Theorem}
\newtheorem{corollary}[defn]{Corollary}
\newtheorem{summary}[defn]{Summary}
\theoremstyle{remark}
\algnewcommand{\LineComment}[1]{\State \(\triangleright\) #1}
\tikzstyle{vertex}=[circle, draw]
\begin{document}

\title[Ordered DAGS: HyperCubeSort]{Ordered DAGS: HyperCubeSort}

\author[Mikhail Gudim]{Mikhail Gudim} 
\address{} \email{mgudim@gmail.com}

\subjclass{} 

\date{\today} 

\maketitle

\begin{abstract}
We generalize the insertion into a binary heap to any directed acyclic graph (DAG) with one source vertex. This lets us formulate a general method for converting any such DAG into a data structure with priority queue interface. We apply our method to a hypercube DAG to obtain a sorting algorithm of complexity $\mathcal{O}(n\log^2(n))$. As another curious application, we derive a relationship between length of longest path and maximum degree of a vertex in a DAG.
\end{abstract}

\section{Introduction}
Consider a sorted linked list, a binary heap and a Young tableau (see Problem $6-3$ in ~\cite{clrs}). The process of inserting a new element is very similar for all three: we repeatedly exchange newly added element with one of its neighbours until it is in correct place. This simple observation is the main motivation for the present work.

Now let us briefly outline the contents. In Section \ref{sec:term} we fix notation and terminology for the entire paper, in particular we define the notion of an ordered DAG. Our main technical result is in Section \ref{sec:maintaining} where we prove that the structure of an ordered DAG can be easily maintained. This allows us to construct a data structure with priority queue interface from any ordered DAG (Section \ref{sec:data_structure}). Section \ref{sec:dag_sort} demonstrates that some classical algorithms can be viewed as a special case of our general construction. The interaction between sorting and DAGs can be applied to prove statement about DAGs. As an example, we derive a relationship between the maximum degree of a vertex in a DAG and length of longest path (Corollary \ref{cor:dags}). The most juicy part of the paper is Section \ref{sec:hypercube}. There we apply our method to a case where underlying DAG is a hypercube and arrive at a sorting algorithm \textproc{HypercubeSort}, which to our knowledge has not yet been described. In Proposition \ref{prop:comp} we derive an exact expression for the complexity of \textproc{HypercubeSort} in the worst case. Asymptotically it is $\mathcal{O}(n\log^2(n))$ . 

The Java implementation of \textproc{HypercubeSort} is available at ~\cite{hypercube_java}.

\section{Terminology and Notation}
\label{sec:term}

\begin{definition}
    Let $G$ be a DAG and suppose that each node $v$ of $G$ has an integer attribute $v.label$. We call such a DAG \textbf{labeled DAG}. We denote the multiset of all labels in a labeled DAG $G$ by $labels(G)$. Let $(u, v)$ be an edge in a labeled DAG $G$. We use the following terminology:
    \begin{enumerate}
        \item Vertex $u$ is a \textbf{previous neighbour} of $v$ and $v$ is a \textbf{next neighbour} of $u$.
        \item If $u.label \leq v.label$ we say that $(u, v)$ is a \textbf{good} edge. Otherwise, $(u, v)$ is a \textbf{bad} edge. If $(u, v)$ is a bad edge, we call $u$ \textbf{violating previous neighbour} of $v$ and we call $v$ \textbf{violating next neighbour} of $u$.
        \item Labeled DAG $G$ is called \textbf{ordered} if all its edges are good.
    \end{enumerate}
\end{definition}

\begin{example}
\label{ex:labeled_dag}
An example of a ordered DAG is shown in Figure \ref{fig:ordered_dag}. Note that we allow equal labels for vertices.
    \begin{figure}[H]
    \centering
        \begin{tikzpicture}[]
            \node[vertex][](v1) at (0, 1) {$1$};
            \node[vertex][](v2) at (0, -1) {$2$};
            \node[vertex][](v3) at (1, 0) {$4$};
            \node[vertex][](v4) at (2, 1) {$6$};
            \node[vertex][](v5) at (2, -1) {$6$};
            \node[vertex][](v6) at (3, 0) {$8$};
            \node[vertex][](v7) at (4, 1) {$8$};
            \node[vertex][](v8) at (4, 0) {$10$};
            \node[vertex][](v9) at (4, -1) {$9$};
            \node[vertex][](v10) at (5, 0) {$12$};
            \node[vertex][](v11) at (6, 1) {$14$};
            \node[vertex][](v12) at (6, -1) {$16$};
    
            \draw[->, >=stealth] (v1) edge (v3);
            \draw[->, >=stealth] (v2) edge (v3);
            \draw[->, >=stealth] (v2) edge[bend right] (v9);
            \draw[->, >=stealth] (v3) edge (v4);
            \draw[->, >=stealth] (v3) edge (v5);
            \draw[->, >=stealth] (v4) edge (v6);
            \draw[->, >=stealth] (v4) edge[bend left] (v7);
            \draw[->, >=stealth] (v5) edge (v6);
            \draw[->, >=stealth] (v6) edge (v7);
            \draw[->, >=stealth] (v6) edge (v8);
            \draw[->, >=stealth] (v6) edge (v9);
            \draw[->, >=stealth] (v7) edge (v8);
            \draw[->, >=stealth] (v7) edge (v10);
            \draw[->, >=stealth] (v8) edge (v10);
            \draw[->, >=stealth] (v9) edge (v8);
            \draw[->, >=stealth] (v9) edge (v10);
            \draw[->, >=stealth] (v10) edge (v11);
            \draw[->, >=stealth] (v10) edge (v12);
        \end{tikzpicture}
    \caption{An example of ordered dag.}
    \label{fig:ordered_dag}
    \end{figure}
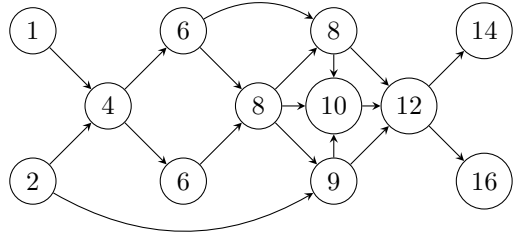
\end{example}

\section{Maintaining Ordered DAGs}
\label{sec:maintaining}
We now describe the procedure \textproc{lowerLabel} which is the workhorse of the entire paper. Generally speaking, it is just a generalization of insertion into a heap, but nevertheless we take care to prove its correctness rigorously. In the following pseudocode we assume that we have a procedure \textproc{getLargestViolating($G$, $v$)} which when given a pointer $v$ to a vertex in a labeled DAG $G$ returns the pointer to a vertex $u$ which is a violating previous neighbour of $v$ with largest $label$ attribute. If there is no previous violating neighbours (this includes the case when there are no previous neighbours at all), the procedure returns $null$.

\begin{algorithm}[H]
    \caption{lowerLabel}\label{alg:lowerLabel}
    \begin{algorithmic}[1]
        \Procedure{lowerLabel}{$G, v, newLabel$}
            \LineComment{
                $G$ is a labeled DAG and $v$ is a pointer to a vertex in $G$ and $newLabel$ is an integer less than $v.label$
            }
            \State $v.label = newLabel$
            \State $current = v$
            \State $violating = true$
            \While{$violating$}
                \State $
                    largestViolating = \textrm{
                        \textproc{getLargestViolating($G$, $current$)}
                    }
                $
                \If{$largestViolating == null$}
                    \State $violating = false$
                \Else
                    \State exchange $current$ with $largestViolating$
                \EndIf
            \EndWhile
            \Return
        \EndProcedure
    \end{algorithmic}
\end{algorithm}

\begin{proposition}
    Let $G$ be an ordered DAG, $v$ a vertex in $G$ and $newLabel$ an integer less than $v.label$.
    \begin{enumerate}
    \item The procedure \textproc{lowerLabel($G$, $v$, $newLabel$)} terminates.
    \item After the termination $G$ remains an ordered DAG.
    \item Let $L$ denote the multiset of labels of $G$ before the call to \textproc{lowerLabel} and $L'$ denote the multiset of labels of $G$ after the call. Then $L'$ is $L$ with $v.label$ replaced by $newLabel$.
    \end{enumerate}
\end{proposition}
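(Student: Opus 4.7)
The plan is to address the three parts in reverse order. Part (3) is essentially immediate: the procedure modifies labels only through the initial assignment $v.label = newLabel$, which replaces a single element of the multiset by $newLabel$, and through subsequent label exchanges, which leave the multiset unchanged. Hence $L'$ is $L$ with $v.label$ replaced by $newLabel$.

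For part (1), I note that each iteration reassigns $current$ to a previous neighbour of its former value. Thus the sequence of values $v_0 = v, v_1, v_2, \ldots$ taken by $current$ traces a backward walk in $G$: $v_{j+1} \to v_j$ is an edge of $G$ for every $j$. Because $G$ is acyclic, no vertex may appear twice in this sequence (a repetition would exhibit a directed cycle in $G$), so by finiteness of $G$ the loop terminates in at most $|V(G)|$ iterations.

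For part (2), my strategy is to maintain the invariant that at the start of every iteration of the while loop, all edges of $G$ are good except possibly those into $current$. When the loop exits, $current$ has no violating previous neighbour, and this invariant then forces every edge to be good, so $G$ remains ordered. The base case holds after the initial assignment because $v$'s label strictly decreased: edges out of $v$ remain good and edges not incident to $v$ are unchanged. For the inductive step, suppose the invariant holds at a moment when $current = v_i$, let $v_{i+1}$ be the chosen largest violating previous neighbour, and write $a$ for $v_{i+1}$'s label just before the exchange. After the exchange, $v_{i+1}$ holds $newLabel$ and $v_i$ holds $a$. Three routine checks cover most affected edges: the edge $v_{i+1} \to v_i$ is now good since $newLabel < a$; every other edge $u \to v_i$ is good because any other previous neighbour $u$ satisfies $u.label \leq a$ (if $u$ is violating, by the maximality of $v_{i+1}$; otherwise $u.label \leq newLabel < a$); and every edge out of $v_{i+1}$ remains good because $v_{i+1}$'s label only decreased.

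The main obstacle is verifying the edges out of $v_i$ after the swap, since $v_i$'s tail label has just increased to $a$. I plan to handle this with a path argument in the original DAG. For any next neighbour $w$ of $v_i$, the path $v_{i+1} \to v_i \to w$ exists in $G$, and, read against the labels present at the start of the call, gives $a \leq$ the initial label of $w$. If $w$ has itself been modified by an earlier iteration, so that $w = v_k$ for some $k$, then acyclicity forces $k < i$, and extending to the longer path $v_{i+1} \to v_i \to \cdots \to v_{k+1}$ in $G$ and invoking the original ordering still yields $a$ at most $w$'s current label. With this the invariant is reestablished and the induction closes.
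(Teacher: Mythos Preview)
Your proof is correct. Parts (1) and (3) match the paper's treatment; your termination argument is in fact more explicit than the paper's, which dismisses it as ``easy to see because $G$ is a finite DAG''.

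For part (2) you take a genuinely different route. The paper maintains a \emph{two-part} loop invariant: besides ``all bad edges enter $current$'' it also carries the local condition that for every previous neighbour $p$ and every next neighbour $n$ of $current$ one has $p.label \le n.label$. This second clause is precisely what handles the edges leaving the old $current$ after the swap: the new label there equals the old $p_k.label$, and the clause gives $p_k.label \le n.label$ for each out-neighbour $n$ directly. You keep only the first clause and, for those same out-edges of $v_i$, appeal instead to the \emph{original} pre-call ordering along the backward path $v_{i+1}\to v_i\to\cdots\to v_{k+1}$, together with the bookkeeping that $a$ is the original label of $v_{i+1}$ and that $w=v_k$ currently carries the original label of $v_{k+1}$. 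This is valid, but it requires tracking the full swap history (which vertex now holds which original label), whereas the paper's stronger invariant is purely local and never refers back to the initial state. The trade-off: your single-clause invariant is lighter to state but its maintenance step is global; the paper's two-clause invariant is slightly heavier to carry but makes the inductive step entirely self-contained.
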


\begin{example}
    Before the proof, it would be illustrative to look at an example. Figure \ref{fig:lowerLabel} shows the execution of \textproc{lowerLabel} on a DAG from Example \ref{ex:labeled_dag}.
    
    \begin{figure}[h]
    \centering
        \begin{subfigure}[t]{0.45\textwidth}
            \begin{tikzpicture}
                \node[vertex][](v1) at (0, 1) {$1$};
                \node[vertex][](v2) at (0, -1) {$2$};
                \node[vertex][](v3) at (1, 0) {$4$};
                \node[vertex][](v4) at (2, 1) {$6$};
                \node[vertex][](v5) at (2, -1) {$6$};
                \node[vertex][](v6) at (3, 0) {$8$};
                \node[vertex][](v7) at (4, 1) {$8$};
                \node[vertex][fill=black, text=white](v8) at (4, 0) {$10$};
                \node[vertex][](v9) at (4, -1) {$9$};
                \node[vertex][fill=gray!30](v10) at (5, 0) {$3$};
                \node[vertex][](v11) at (6, 1) {$14$};
                \node[vertex][](v12) at (6, -1) {$16$};
    
                \draw[->, >=stealth] (v1) edge (v3);
                \draw[->, >=stealth] (v2) edge (v3);
                \draw[->, >=stealth] (v2) edge[bend right] (v9);
                \draw[->, >=stealth] (v3) edge (v4);
                \draw[->, >=stealth] (v3) edge (v5);
                \draw[->, >=stealth] (v4) edge (v6);
                \draw[->, >=stealth] (v4) edge[bend left] (v7);
                \draw[->, >=stealth] (v5) edge (v6);
                \draw[->, >=stealth] (v6) edge (v7);
                \draw[->, >=stealth] (v6) edge (v8);
                \draw[->, >=stealth] (v6) edge (v9);
                \draw[->, >=stealth] (v7) edge (v8);
                \draw[->, >=stealth] (v7) edge (v10);
                \draw[->, >=stealth] (v8) edge (v10);
                \draw[->, >=stealth] (v9) edge (v8);
                \draw[->, >=stealth] (v9) edge (v10);
                \draw[->, >=stealth] (v10) edge (v11);
                \draw[->, >=stealth] (v10) edge (v12);
            \end{tikzpicture}
            \caption{After execution of line 4}
        \end{subfigure}
        \hspace{\fill}
        \begin{subfigure}[t]{0.45\textwidth}
            \begin{tikzpicture}
                \node[vertex][](v1) at (0, 1) {$1$};
                \node[vertex][](v2) at (0, -1) {$2$};
                \node[vertex][](v3) at (1, 0) {$4$};
                \node[vertex][](v4) at (2, 1) {$6$};
                \node[vertex][](v5) at (2, -1) {$6$};
                \node[vertex][](v6) at (3, 0) {$8$};
                \node[vertex][](v7) at (4, 1) {$8$};
                \node[vertex][fill=gray!30](v8) at (4, 0) {$3$};
                \node[vertex][fill=black, text=white](v9) at (4, -1) {$9$};
                \node[vertex][](v10) at (5, 0) {$10$};
                \node[vertex][](v11) at (6, 1) {$14$};
                \node[vertex][](v12) at (6, -1) {$16$};
    
                \draw[->, >=stealth] (v1) edge (v3);
                \draw[->, >=stealth] (v2) edge (v3);
                \draw[->, >=stealth] (v2) edge[bend right] (v9);
                \draw[->, >=stealth] (v3) edge (v4);
                \draw[->, >=stealth] (v3) edge (v5);
                \draw[->, >=stealth] (v4) edge (v6);
                \draw[->, >=stealth] (v4) edge[bend left] (v7);
                \draw[->, >=stealth] (v5) edge (v6);
                \draw[->, >=stealth] (v6) edge (v7);
                \draw[->, >=stealth] (v6) edge (v8);
                \draw[->, >=stealth] (v6) edge (v9);
                \draw[->, >=stealth] (v7) edge (v8);
                \draw[->, >=stealth] (v7) edge (v10);
                \draw[->, >=stealth] (v8) edge (v10);
                \draw[->, >=stealth] (v9) edge (v8);
                \draw[->, >=stealth] (v9) edge (v10);
                \draw[->, >=stealth] (v10) edge (v11);
                \draw[->, >=stealth] (v10) edge (v12);
            \end{tikzpicture}
            \caption{End of iteration 1}
        \end{subfigure}
    
        \begin{subfigure}[t]{0.45\textwidth}
            \begin{tikzpicture}
                \node[vertex][](v1) at (0, 1) {$1$};
                \node[vertex][](v2) at (0, -1) {$2$};
                \node[vertex][](v3) at (1, 0) {$4$};
                \node[vertex][](v4) at (2, 1) {$6$};
                \node[vertex][](v5) at (2, -1) {$6$};
                \node[vertex][fill=black, text=white](v6) at (3, 0) {$8$};
                \node[vertex][](v7) at (4, 1) {$8$};
                \node[vertex][](v8) at (4, 0) {$9$};
                \node[vertex][fill=gray!30](v9) at (4, -1) {$3$};
                \node[vertex][](v10) at (5, 0) {$10$};
                \node[vertex][](v11) at (6, 1) {$14$};
                \node[vertex][](v12) at (6, -1) {$16$};
    
                \draw[->, >=stealth] (v1) edge (v3);
                \draw[->, >=stealth] (v2) edge (v3);
                \draw[->, >=stealth] (v2) edge[bend right] (v9);
                \draw[->, >=stealth] (v3) edge (v4);
                \draw[->, >=stealth] (v3) edge (v5);
                \draw[->, >=stealth] (v4) edge (v6);
                \draw[->, >=stealth] (v4) edge[bend left] (v7);
                \draw[->, >=stealth] (v5) edge (v6);
                \draw[->, >=stealth] (v6) edge (v7);
                \draw[->, >=stealth] (v6) edge (v8);
                \draw[->, >=stealth] (v6) edge (v9);
                \draw[->, >=stealth] (v7) edge (v8);
                \draw[->, >=stealth] (v7) edge (v10);
                \draw[->, >=stealth] (v8) edge (v10);
                \draw[->, >=stealth] (v9) edge (v8);
                \draw[->, >=stealth] (v9) edge (v10);
                \draw[->, >=stealth] (v10) edge (v11);
                \draw[->, >=stealth] (v10) edge (v12);
            \end{tikzpicture}
            \caption{End of iteration 2}
        \end{subfigure}
        \hspace{\fill}
        \begin{subfigure}[t]{0.45\textwidth}
            \begin{tikzpicture}
                \node[vertex][](v1) at (0, 1) {$1$};
                \node[vertex][](v2) at (0, -1) {$2$};
                \node[vertex][](v3) at (1, 0) {$4$};
                \node[vertex][fill=black, text=white](v4) at (2, 1) {$6$};
                \node[vertex][](v5) at (2, -1) {$6$};
                \node[vertex][fill=gray!30](v6) at (3, 0) {$3$};
                \node[vertex][](v7) at (4, 1) {$8$};
                \node[vertex][](v8) at (4, 0) {$9$};
                \node[vertex][](v9) at (4, -1) {$8$};
                \node[vertex][](v10) at (5, 0) {$10$};
                \node[vertex][](v11) at (6, 1) {$14$};
                \node[vertex][](v12) at (6, -1) {$16$};
    
                \draw[->, >=stealth] (v1) edge (v3);
                \draw[->, >=stealth] (v2) edge (v3);
                \draw[->, >=stealth] (v2) edge[bend right] (v9);
                \draw[->, >=stealth] (v3) edge (v4);
                \draw[->, >=stealth] (v3) edge (v5);
                \draw[->, >=stealth] (v4) edge (v6);
                \draw[->, >=stealth] (v4) edge[bend left] (v7);
                \draw[->, >=stealth] (v5) edge (v6);
                \draw[->, >=stealth] (v6) edge (v7);
                \draw[->, >=stealth] (v6) edge (v8);
                \draw[->, >=stealth] (v6) edge (v9);
                \draw[->, >=stealth] (v7) edge (v8);
                \draw[->, >=stealth] (v7) edge (v10);
                \draw[->, >=stealth] (v8) edge (v10);
                \draw[->, >=stealth] (v9) edge (v8);
                \draw[->, >=stealth] (v9) edge (v10);
                \draw[->, >=stealth] (v10) edge (v11);
                \draw[->, >=stealth] (v10) edge (v12);
            \end{tikzpicture}
            \caption{End of iteration 3}
        \end{subfigure}
    
        \begin{subfigure}[t]{0.45\textwidth}
            \begin{tikzpicture}
                \node[vertex][](v1) at (0, 1) {$1$};
                \node[vertex][](v2) at (0, -1) {$2$};
                \node[vertex][fill=black, text=white](v3) at (1, 0) {$4$};
                \node[vertex][fill=gray!30](v4) at (2, 1) {$3$};
                \node[vertex][](v5) at (2, -1) {$6$};
                \node[vertex][](v6) at (3, 0) {$6$};
                \node[vertex][](v7) at (4, 1) {$8$};
                \node[vertex][](v8) at (4, 0) {$9$};
                \node[vertex][](v9) at (4, -1) {$8$};
                \node[vertex][](v10) at (5, 0) {$10$};
                \node[vertex][](v11) at (6, 1) {$14$};
                \node[vertex][](v12) at (6, -1) {$16$};
    
                \draw[->, >=stealth] (v1) edge (v3);
                \draw[->, >=stealth] (v2) edge (v3);
                \draw[->, >=stealth] (v2) edge[bend right] (v9);
                \draw[->, >=stealth] (v3) edge (v4);
                \draw[->, >=stealth] (v3) edge (v5);
                \draw[->, >=stealth] (v4) edge (v6);
                \draw[->, >=stealth] (v4) edge[bend left] (v7);
                \draw[->, >=stealth] (v5) edge (v6);
                \draw[->, >=stealth] (v6) edge (v7);
                \draw[->, >=stealth] (v6) edge (v8);
                \draw[->, >=stealth] (v6) edge (v9);
                \draw[->, >=stealth] (v7) edge (v8);
                \draw[->, >=stealth] (v7) edge (v10);
                \draw[->, >=stealth] (v8) edge (v10);
                \draw[->, >=stealth] (v9) edge (v8);
                \draw[->, >=stealth] (v9) edge (v10);
                \draw[->, >=stealth] (v10) edge (v11);
                \draw[->, >=stealth] (v10) edge (v12);
            \end{tikzpicture}
            \caption{End of iteration 4}
        \end{subfigure}
        \hspace{\fill}
        \begin{subfigure}[t]{0.45\textwidth}
            \begin{tikzpicture}
                \node[vertex][](v1) at (0, 1) {$1$};
                \node[vertex][](v2) at (0, -1) {$2$};
                \node[vertex][fill=gray!30](v3) at (1, 0) {$3$};
                \node[vertex][](v4) at (2, 1) {$4$};
                \node[vertex][](v5) at (2, -1) {$6$};
                \node[vertex][](v6) at (3, 0) {$6$};
                \node[vertex][](v7) at (4, 1) {$8$};
                \node[vertex][](v8) at (4, 0) {$9$};
                \node[vertex][](v9) at (4, -1) {$8$};
                \node[vertex][](v10) at (5, 0) {$10$};
                \node[vertex][](v11) at (6, 1) {$14$};
                \node[vertex][](v12) at (6, -1) {$16$};
    
                \draw[->, >=stealth] (v1) edge (v3);
                \draw[->, >=stealth] (v2) edge (v3);
                \draw[->, >=stealth] (v2) edge[bend right] (v9);
                \draw[->, >=stealth] (v3) edge (v4);
                \draw[->, >=stealth] (v3) edge (v5);
                \draw[->, >=stealth] (v4) edge (v6);
                \draw[->, >=stealth] (v4) edge[bend left] (v7);
                \draw[->, >=stealth] (v5) edge (v6);
                \draw[->, >=stealth] (v6) edge (v7);
                \draw[->, >=stealth] (v6) edge (v8);
                \draw[->, >=stealth] (v6) edge (v9);
                \draw[->, >=stealth] (v7) edge (v8);
                \draw[->, >=stealth] (v7) edge (v10);
                \draw[->, >=stealth] (v8) edge (v10);
                \draw[->, >=stealth] (v9) edge (v8);
                \draw[->, >=stealth] (v9) edge (v10);
                \draw[->, >=stealth] (v10) edge (v11);
                \draw[->, >=stealth] (v10) edge (v12);
            \end{tikzpicture}
            \caption{End of iteration 5. The DAG $G$ is now ordered, procedure exits.}
        \end{subfigure}
        \caption{
            Procedure \textproc{lowerLabel} applied to ordered DAG from Figure \ref{fig:ordered_dag} to lower label of a vertex from $12$ to $3$.
            The vertex $current$ is highlighted in gray and the vertex that will be returned by \textproc{getLargestViolating} at next iteration is black.
        \label{fig:lowerLabel}
    }
    \end{figure}
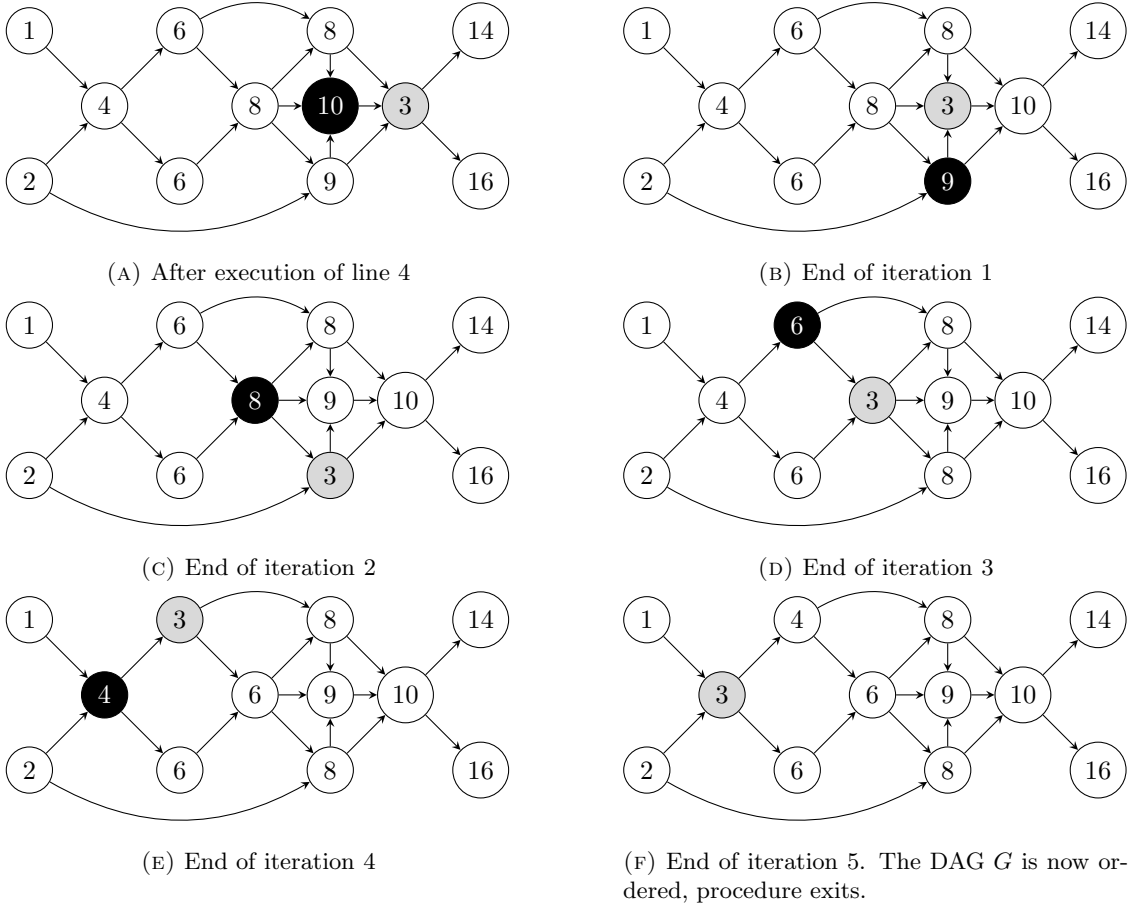
\end{example}

\begin{proof}
    It is easy to see that the procedure terminates because $G$ is a finite DAG. Statement (3) is also clear, since the only step which changes the multiset of labels is in line 3 of the pseudocode.

    We prove (2) by showing that the \textproc{lowerLabel} maintains the following two-part invariant:
    \begin{displayquote}
        At the end of each iteration of the \textbf{while} loop in lines 6-11:
        \begin{enumerate}
        \item All bad edges in $G$ (if any) are entering $current$.
        \item For any previous neighbour $p$ of $current$ and any next neighbour $n$ of $current$ $p.label \leq n.label$.
        \end{enumerate}
    \end{displayquote}

    Before line 4 $G$ is ordered. Right after line 4 executes $current$ is $v$ and it is the only vertex whose $label$ attribute changed, so part (2) of the invariant is maintained and the only edges that could become bad are those entering and leaving $current$. Since the $label$ of $v$ becomes smaller, all edges leaving $current$ remain good, but edges entering $current$ could become bad, so part (1) of the invariant is maintained.

    Now assume the invariant was maintained for the first $m$ iterations of the \textbf{while} loop. We show that it is maintained after $(m + 1)$-st iteration. If \textproc{getLargestViolating} returns $null$ it means that there are no bad edges entering $current$ and by part (1) of the invariant $G$ is ordered DAG. The procedure exists. Now we consider the case when \textproc{getLargestViolating} returns non-null value.
    
    Let $x$ denote the vertex in $current$ variable before the exchange in line 11. Let $\alpha = x.label$, denote by $p_1, p_2, \dots, p_k$ previous neighbours of $x$, by $n_1, n_2, \dots, n_l$ next neighbours of $x$ and suppose \textproc{getLargestViolating} returns $p_k$ with $p_k.label = \beta$. By definition of \textproc{getLargestViolating} the following inequalities are true:

    \begin{equation}
        \begin{gathered}
            \beta > \alpha\\
            \textrm{ (because $p_k$ is violating)}
        \end{gathered}
        \label{ineq:violating}
    \end{equation}
    and
    \begin{equation}
        \begin{gathered}
            \beta \geq p_i.label
            \textrm{ for all $i$ with $1 \leq i \leq (k - 1)$}\\
            \textrm{ (because $p.k$ is largest violating previous neighbour)}
            \label{ineq:largest_violating}
        \end{gathered}
    \end{equation}

    Because part (1) of the invariant was maintained up to this moment, the following inequalities are true:

    \begin{equation}
        \alpha \leq n_j.label \textrm{ for all $j$ with $1 \leq j \leq l$}
        \label{ineq:inv_part1}
    \end{equation}

    Because part (2) of the invariant was maintained up to this moment, the following inequalities are also true:

    \begin{equation}
        p_i.label \leq n_j.label \textrm{ for all $i$ and $j$}
        \label{ineq:inv_part2}
    \end{equation}

    After the exchange in line 11 $current$ is $p_k$ and $label$ attributes of only two vertices changed, namely $x.label = \beta$ and $p_k.label = \alpha$. All edges entering $x$ are of the form $(p_i, x)$ for $1 \leq i \leq k$. By inequalities in (\ref{ineq:largest_violating}) all edges $(p_i, x)$ for $1 \leq i \leq (k - 1)$ are good and by inequality (\ref{ineq:violating}) the edge $(p_k, x)$ is good. So all the edges entering $x$ are good. All the edges leaving $x$ are of the form $(x, n_j)$ for $1 \leq j \leq l$. They remain good by inequalities in (\ref{ineq:inv_part2}) applied with $i = k$. Now let us look at $current = p_k$. Before the exchange all edges entering and leaving $p_k$ were good. After the exchange, by the inequality (\ref{ineq:violating}) value of $p_k.label$ lowered. This means all the edges leaving $p_k$ remain good and only edges entering $p_k$ could become bad. So part (2) of the invariant is maintained.
\end{proof}

The following pseudocode shows that in abstract sense the processes of lowering and raising value of $label$ attribute are equivalent:

\begin{algorithm}[H]
    \caption{raiseLabel}\label{alg:raiseLabel}
    \begin{algorithmic}[1]
        \Procedure{raiseLabel}{$G, v, newLabel$}
            \LineComment{
                $G$ is a labeled DAG and $v$ is a pointer to a vertex in $G$ and $newLabel$ is an integer greater than $v.label$
            }
            \State Multiply the label of each vertex of $G$ by $(-1)$
            \State Reverse each edge in $G$ \Comment{after this $G$ is ordered}
            \State Run the procedure \textproc{lowerLabel$(G, v, -newLabel)$}
            \State Multiply the label of each vertex of $G$ by $(-1)$
            \State Reverse each edge in $G$
        \EndProcedure
    \end{algorithmic}
\end{algorithm}
In practice one can implement \textproc{raiseLabel} in a completely symmetrical way to \textproc{lowerLabel} by reversing the directions of edges and meaning of comparisons.

\section{Ordered DAG Data Structure}
\label{sec:data_structure}
With Algorithms \ref{alg:lowerLabel} and \ref{alg:raiseLabel} at hand, it is probably clear how to make any DAG $G$ with only one source vertex induces a data structure $\textproc{OrderedDag}_G$. However, there are some details which we do not want to neglect.

The constructor of $\textproc{OrderedDag}_G$ assigns value of $\infty$ to every vertex in the DAG.

To insert a new element with label $l$ in principle one can pick \emph{any} vertex $v$ with $v.label = \infty$ and then call $\textproc{lowerLabel}(v, l)$ to restore the order in $G$. But arbitrary choice is ambiguous and may be non-optimal. Therefore, we also assume that we have a (stateful) procedure $\textproc{getNext}()$ which returns the next vertex of $G$ in breadth-first order. In particular, the first call to \textproc{getNext} returns the source vertex $s$. The second call returns one of next neighbours of $s$. After all neighbours of $s$ were returned, it returns neighbours of neighbours of $s$ and so on. Thus we have:
\begin{algorithm}[H]
    \caption{insert}\label{alg:insert}
    \begin{algorithmic}[1]
        \Procedure{insert}{$l$}
            \LineComment{
                The element $l$ is the new label to be inserted into $G$.
            }
            \State $nextVertex = \textproc{getNext}()$
            \State $\textproc{lowerLabel}(G, nextVertex, l)$
        \EndProcedure
    \end{algorithmic}
\end{algorithm}

Since $G$ is ordered DAG, the source vertex $s$ must have the minimum label. The procedure \textproc{getMin} returns this vertex.

Given a vertex $v$ in $G$ we can remove $v$ by calling $\textproc{raiseLabel}(G, v, \infty))$. In particular, we can remove the minimum value this way:
\begin{algorithm}[H]
    \caption{removeMin}\label{alg:remove_min}
    \begin{algorithmic}[1]
        \Procedure{removeMin}{}
            \LineComment{
                Removes the vertex with minimum $label$ attribute from $G$.
            }
            \State $s = \textproc{getMin}()$
            \State $\textproc{raiseLabel}(G, s, \infty)$
        \EndProcedure
    \end{algorithmic}
\end{algorithm}

For the reference we make the following Summary:
\begin{summary}
    Any DAG $G$ with $N$ vertices and only one source vertex induces a data structure $\textproc{OrderedDag}_G$, which implements the following interface:
    \begin{enumerate}
        \item $\textproc{OrderedDag(G)}$: creates the data structure $\textproc{OrderedDag}_G$, with $v.label = \infty$ for each vertex $v$ in $G$.
        \item $\textproc{insert}(l)$: replaces one of $\infty$'s (if any left) with $l$ in $labels(G)$.
        \item $\textproc{getMin}()$: returns a vertex whose $label$ attribute is a minimum value of $labels(G)$.
        \item $\textproc{removeMin}()$: removes (a) minimum value from $labels(G)$.
        \item $\textproc{lowerLabel}(v, newLabel)$: replaces $v.label$ in $labels(G)$ with $newLabel < v.label$.
        \item $\textproc{raiseLabel}(v, newLabel)$: replaces $v.label$ in $labels(G)$ with $newLabel > v.label$.
    \end{enumerate}
\end{summary}

\section{General \textproc{DAGSort}}
\label{sec:dag_sort}
Given a DAG $G$ with $n$ vertices we can use $\textproc{OrderedDag}_G$ to sort array of $n$ elements.
\begin{algorithm}[H]
    \caption{DAGSort}\label{alg:dag_sort}
    \begin{algorithmic}[1]
        \Procedure{DAGSort}{$G, A$}
            \LineComment{
                $G$ is a DAG with $n$ vertices and $A$ is array with $n$ elements.
            }
            \State $dag = \textproc{OrderedDag}(G)$
            \For {all elements $a$ in $A$}
                \State $dag.\textproc{insert}(a)$
            \EndFor
            \State $A' = \textrm{new array of size $n$}$
            \For {$i$ from $0$ to $(n - 1)$ inclusively}
                \State $A'[i] = dag.\textproc{removeMin}()$
            \EndFor
            \Return {$A'$}
        \EndProcedure
    \end{algorithmic}
\end{algorithm}

For a general DAG we have the following obvious coarse upper-bound on complexity:
\begin{proposition}
\label{prop:gen_comp}
Let $G$ be any DAG with $n$ vertices and only one source vertex $s$. Let $D_{in}$ and $D_{out}$ denote the highest in- and out- degree of a vertex in $G$ respectively and let $L$ denote the maximum length of a simple path starting at $s$. Then $\textproc{DAGSort}(G, \bullet)$ makes at most $nL(D_{in} + D_{out})$ comparisons.
\end{proposition}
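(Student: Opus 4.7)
The plan is to bound the number of comparisons performed in one call to \textproc{lowerLabel} (respectively \textproc{raiseLabel}) and then multiply by the $n$ insertions and $n$ deletions performed by \textproc{DAGSort}. Since each \textproc{insert} triggers a single \textproc{lowerLabel} and each \textproc{removeMin} triggers a single \textproc{raiseLabel}, showing that each such call uses at most $L\cdot D_{in}$ (resp.\ $L\cdot D_{out}$) comparisons will immediately give the claim.

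For a single \textproc{lowerLabel}$(G,v,newLabel)$ I would first note that every iteration of the \textbf{while} loop makes exactly one call to \textproc{getLargestViolating}, and that determining the largest violating previous neighbour of $current$ requires scanning the at most $D_{in}$ previous neighbours, i.e.\ at most $D_{in}$ comparisons. In every iteration that does not exit the loop, $current$ is overwritten by one of its previous neighbours, so the vertices successively stored in $current$ form a walk traversed against the direction of edges of $G$. Because $G$ is acyclic, no vertex can repeat, and this walk is a simple path read backwards.

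The central step is to bound the length $k$ of that path by $L$. Reading it forwards, we obtain a simple path $u_0 \to u_1 \to \cdots \to u_k = v$ in $G$. Since $s$ is the unique source, $u_0$ is reachable from $s$ by some simple path $P$. In any topological order of $G$ every vertex of $P$ precedes $u_0$ while every $u_i$ with $i\ge 1$ follows $u_0$, so $P$ and $u_0\to u_1\to\cdots\to u_k$ meet only at $u_0$, and their concatenation is a simple path from $s$ of length at least $k$. Maximality of $L$ forces $k\le L$, and therefore \textproc{lowerLabel} performs at most $L\cdot D_{in}$ comparisons.

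For \textproc{raiseLabel} the argument is dual: reversing edges turns it into a \textproc{lowerLabel} on the reversed DAG, in which in- and out-degrees swap (so each iteration costs at most $D_{out}$ comparisons), and the traced backward path becomes a forward simple path starting at $v$ in $G$, whose length is again bounded by $L$ by prepending a simple $s\!\to\!v$ path and invoking the same topological argument. I expect the path-length bound to be the only real obstacle; it is also the unique point where the hypothesis that $G$ has a single source is actually used. Summing the per-call bounds over the $n$ insertions and $n$ deletions yields $nL(D_{in}+D_{out})$, as claimed.
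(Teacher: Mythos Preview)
Your argument is correct and follows the same structure as the paper's proof: bound one call to \textproc{lowerLabel} by $L\cdot D_{in}$ and one call to \textproc{raiseLabel} by $L\cdot D_{out}$, then sum over the $n$ insertions and $n$ deletions. The paper's own proof is a two-sentence sketch that simply asserts the inserted element is exchanged with at most $L$ vertices; your topological-order argument (prepending an $s\!\to\!u_0$ path to bound the length of the backward walk) is exactly the justification that sketch omits, and it is also the place where the single-source hypothesis enters, as you point out.
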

\begin{proof}
    During insertion new element will be exchanged with at most $L$ vertices. To make one exchange we need at most $D_{in}$ comparisons, so the cost of inserting all $n$ elements into $\textproc{OrderedDag}_G$ is bounded above by $nD_{in}L$. The other term $nD_{out}L$ comes from extracting minimum element $n$ times.
\end{proof}

Now we put some well-know algorithms in the context of general \textproc{DAGSort}.
\begin{example}
    Let $G$ be a DAG in Figure \ref{fig:selection_sort}. With this $G$ as an underlying DAG the $\textproc{DAGSort}(G, \bullet)$ is the selection sort algorithm. The vertex $s$ always contains the minimum value.
    \begin{figure}[H]
    \centering
        \begin{tikzpicture}[]
            \node[vertex][](v0) at (0, 3) {$s$};
            \node[vertex][](v1) at (-2, 0) {};
            \node[vertex][](v2) at (-1, 0) {};
            \node[vertex][](v3) at (0, 0) {};
            \node[vertex][](v4) at (1, 0) {};
            \node[vertex][](v5) at (2, 0) {};
            \draw[->, >=stealth] (v0) edge (v1);
            \draw[->, >=stealth] (v0) edge (v2);
            \draw[->, >=stealth] (v0) edge (v3);
            \draw[->, >=stealth] (v0) edge (v4);
            \draw[->, >=stealth] (v0) edge (v5);
        \end{tikzpicture}
    \caption{}
    \label{fig:selection_sort}
    \end{figure}
\end{example}

\begin{example}
    Let $G$ be a DAG in Figure \ref{fig:insertion_sort}. With this $G$ as an underlying DAG the $\textproc{DAGSort}(G, \bullet)$ is the insertion sort algorithm.
    \begin{figure}[H]
    \centering
        \begin{tikzpicture}[]
            \node[vertex][](v0) at (0, 0) {$s$};
            \node[vertex][](v1) at (1, 0) {};
            \node[vertex][](v2) at (2, 0) {};
            \node[vertex][](v3) at (3, 0) {};
            \node[vertex][](v4) at (4, 0) {};
            \node[vertex][](v5) at (5, 0) {};
            \draw[->, >=stealth] (v0) edge (v1);
            \draw[->, >=stealth] (v1) edge (v2);
            \draw[->, >=stealth] (v2) edge (v3);
            \draw[->, >=stealth] (v3) edge (v4);
            \draw[->, >=stealth] (v4) edge (v5);
        \end{tikzpicture}
    \caption{}
    \label{fig:insertion_sort}
    \end{figure}
\end{example}

\begin{example}
    Let $G$ be a DAG in Figure \ref{fig:insertion_sort}. With this $G$ as an underlying DAG the $\textproc{DAGSort}(G, \bullet)$ is the sorting algorithm using Young tableau.
    \begin{figure}[H]
    \centering
        \begin{tikzpicture}[]
            \node[vertex][](v0) at (0, 0) {};
            \node[vertex][](v1) at (0, 1) {};
            \node[vertex][](v2) at (0, 2) {};
            \node[vertex][](v3) at (-1, 0) {};
            \node[vertex][](v4) at (-1, 1) {};
            \node[vertex][](v5) at (-1, 2) {};
            \node[vertex][](v6) at (-2, 0) {$s$};
            \node[vertex][](v7) at (-2, 1) {};
            \node[vertex][](v8) at (-2, 2) {};
            \draw[->, >=stealth] (v0) edge (v1);
            \draw[->, >=stealth] (v1) edge (v2);
            \draw[->, >=stealth] (v3) edge (v4);
            \draw[->, >=stealth] (v4) edge (v5);
            \draw[->, >=stealth] (v6) edge (v7);
            \draw[->, >=stealth] (v7) edge (v8);

            \draw[->, >=stealth] (v6) edge (v3);
            \draw[->, >=stealth] (v3) edge (v0);
            \draw[->, >=stealth] (v7) edge (v4);
            \draw[->, >=stealth] (v4) edge (v1);
            \draw[->, >=stealth] (v8) edge (v5);
            \draw[->, >=stealth] (v5) edge (v2);
        \end{tikzpicture}
    \caption{}
    \label{fig:insertion_sort}
    \end{figure}
    One can generalize two-dimensional Young tableaux to $k$-dimensional tableaux: the underlying DAG is a $k$-dimensional grid. If a grid has $n$ elements then the length of longest path is $k\sqrt[k]{n}$ and each vertex has at most $k$ previous and $k$ next neighbours. Thus by upper bound of \ref{prop:gen_comp} the \textproc{DAGSort} using $k$-dimensional Young tableaux is of complexity $\mathcal{O}(kn^{1 + \frac{1}{k}})$.
\end{example}

If we allow very high in-degree of a vertex in a DAG, we can make the longest path in the DAG small - consider DAG with one source vertex connected to other vertices. At the other extreme, consider a DAG where all the vertices are arranged in a linked list: all the vertices have small degree but the longest path is long. What can we say in a generic case?
\begin{corollary}
\label{cor:dags}
    Let $G$ be any DAG with $n \geq 2$ vertices and only one source vertex $s$. Let $D_{in}$ and $D_{out}$ denote the highest in- and out- degree of a vertex in $G$ respectively and let $L$ denote the maximum length of a simple path starting at $s$. Then the following inequality holds:
    \begin{equation*}
        \frac{1}{n} \log(n!) \leq L(D_{in} + D_{out})
    \end{equation*}
\end{corollary}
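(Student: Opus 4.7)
The plan is to deduce the inequality from the standard information-theoretic lower bound for comparison-based sorting, applied to \textproc{DAGSort}$(G, \bullet)$.

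First I would observe that \textproc{DAGSort}$(G, A)$ is a comparison-based sorting algorithm: the only way it inspects the input values is through label comparisons made by \textproc{getLargestViolating} inside the calls to \textproc{lowerLabel} and \textproc{raiseLabel} issued by \textproc{insert} and \textproc{removeMin}. Since $G$ has $n$ vertices, \textproc{DAGSort}$(G, \bullet)$ takes any length-$n$ array as input and returns it in sorted order, by the correctness of $\textproc{OrderedDag}_G$ established in the previous sections.

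Next I would invoke the classical decision-tree lower bound: every comparison-based algorithm that sorts $n$ elements must, in the worst case, perform at least $\log(n!)$ comparisons (the base-$2$ logarithm, since each internal node of a binary decision tree has two children and the tree must have at least $n!$ leaves to distinguish all permutations). Applying this to \textproc{DAGSort}$(G, \bullet)$ and combining with the upper bound of Proposition \ref{prop:gen_comp}, which says that the algorithm makes at most $nL(D_{in}+D_{out})$ comparisons on any input of length $n$, yields
\begin{equation*}
\log(n!) \;\leq\; nL(D_{in}+D_{out}).
\end{equation*}
Dividing both sides by $n$ gives the claimed inequality.

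There is essentially no technical obstacle: the main content of the corollary is the conceptual realization that a sorting lower bound translates into a purely combinatorial statement about DAGs. The condition $n \geq 2$ is only there to make $\log(n!) > 0$ so that the bound is non-vacuous. The one minor point worth mentioning in the write-up is the justification that \textproc{DAGSort} is genuinely comparison-based, so that the decision-tree bound applies; this is immediate from inspecting Algorithms \ref{alg:lowerLabel}, \ref{alg:raiseLabel}, \ref{alg:insert}, and \ref{alg:remove_min}.
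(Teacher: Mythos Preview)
Your proposal is correct and follows essentially the same approach as the paper: combine the information-theoretic lower bound $\log(n!)$ on the worst-case number of comparisons with the upper bound $nL(D_{in}+D_{out})$ from Proposition~\ref{prop:gen_comp}, then divide by $n$. Your write-up is in fact slightly more careful than the paper's, since you explicitly justify that \textproc{DAGSort} is comparison-based before invoking the decision-tree bound.
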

In other words, to densely pack (length of longest path is small) $n$ vertices into a DAG one cannot avoid vertices with high in-degrees.
\begin{proof}
    Let $G$ be any DAG with $n \geq 2$ vertices and let $\textproc{DAGSort}_G$ denote the sorting algorithm defined by $G$. Let $I$ denote the input on which $\textproc{DAGSort}_G$ makes at least $\log(n!)$ comparisons to sort $I$. Such input always exists (see Section 8-1 of ~\cite{clrs}). Let us denote the number of comparisons made by $\textproc{DAGSort}_G$ to sort $I$ by $T(I)$. We have the following lower bound on $T(I)$:
    \begin{equation}
        \log(n!) \leq T(I)
    \end{equation}
    On the other hand, by the bound in Proposition \ref{prop:gen_comp} we have:
    \begin{equation}
        T(I) \leq nL(D_{in} + D_{out})
    \end{equation}
    Combining the two inequalities above gives the result.
\end{proof}

\section{\textproc{HypercubeSort}}
\label{sec:hypercube}
Let $S$ be a set with $k$ elements. The set of $2^k$ subsets of $S$ forms a DAG: vertices are subsets there is an arrow from subset $S$ to subset $T$ if $T$ is obtained from $S$ by adding one element. We denote this DAG by $DAG(S)$. An example with $S = \{1, 2, 3\}$ is shown in Figure \ref{fig:subset_dag}.

\def \side{3}
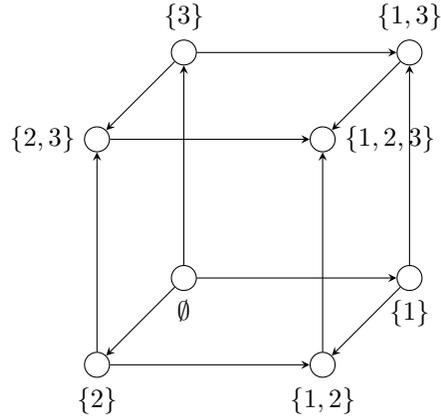
\begin{figure}[H]
\centering
    \begin{tikzpicture}[]
        \node[vertex, label=below:{$\emptyset$}][](v1) at (0, 0, 0) {};
        \node[vertex, label=below:{$\{1\}$}][](v2) at (\side, 0, 0) {};
        \node[vertex, label=above:{$\{3\}$}][](v3) at (0, \side, 0) {};
        \node[vertex, label=below:{$\{2\}$}][](v4) at (0, 0, \side) {};
        \node[vertex, label=above:{$\{1, 3\}$}][](v5) at (\side, \side, 0) {};
        \node[vertex, label=below:{$\{1, 2\}$}][](v6) at (\side, 0, \side) {};
        \node[vertex, label=left:{$\{2, 3\}$}][](v7) at (0, \side, \side) {};
        \node[vertex, label=right:{$\{1, 2, 3\}$}][](v8) at (\side, \side, \side) {};

        \draw[->, >=stealth] (v1) edge (v2);
        \draw[->, >=stealth] (v1) edge (v3);
        \draw[->, >=stealth] (v1) edge (v4);
        \draw[->, >=stealth] (v2) edge (v5);
        \draw[->, >=stealth] (v2) edge (v6);
        \draw[->, >=stealth] (v3) edge (v5);
        \draw[->, >=stealth] (v3) edge (v7);
        \draw[->, >=stealth] (v4) edge (v6);
        \draw[->, >=stealth] (v4) edge (v7);
        \draw[->, >=stealth] (v5) edge (v8);
        \draw[->, >=stealth] (v6) edge (v8);
        \draw[->, >=stealth] (v7) edge (v8);
    \end{tikzpicture}
\caption{$DAG(\{1, 2, 3\})$}
\label{fig:subset_dag}
\end{figure}

The DAG of subsets $DAG(S)$ is a $k$-fold direct product of a DAG in Figure \ref{fig:edge_dag} with itself. Such a DAG is called $k$-dimensional \textbf{hypercube}. Since the vertices can be identified with subsets we have a notion of \textbf{cardinality of a vertex}: we say that a vertex $v$ is hypercube is of cardinality $m$ if $v$ corresponds to an $m$-element subset of $S$. There is only one vertex $s$ of cardinality zero (it corresponds to the empty subset) which is the only source vertex. Let $u$ be a vertex of cardinality $m > 0$. Then length of any path from $s$ to $u$ is $m$, $u$ has $m$ previous neighbours and $(k - m)$ next neighbours.
\begin{figure}[H]
\centering
    \begin{tikzpicture}[]
        \node[vertex][](v1) at (0, 0) {};
        \node[vertex][](v2) at (3, 0) {};
        \draw[->, >=stealth] (v1) edge (v2);
    \end{tikzpicture}
\caption{}
\label{fig:edge_dag}
\end{figure}

We can apply the general \textproc{DAGSort} to the hypercube DAG and we call the resulting algorithm \textproc{HypercubeSort}. To implement \textproc{HypercubeSort} we can identify a vertex of a hypercube with the subset it represents which can be written uniquely as a binary string of length $k$. For example, the subset $\{2, 4, 5\}$ of $S = \{1, 2, 3, 4, 5, 6, 7, 8\}$ is written as $01011000$. In turn, we can convert each binary string to a number, which can serve as an index into array. The procedure \textproc{getNext} can be implemented by first listing all $0$-element subsets, then all $1$-element subsets, then all $2$-element subsets and so on. The exact algorithm that does not require any extra memory and each call takes $\mathcal{O}(1)$ time is described in Section $3$ of ~\cite{comb_alg}. Our Java implementation of \textproc{HypercubeSort} with all the details is available at ~\cite{hypercube_java}, but in our version for simplicity we precompute the order of vertices by a recursive procedure and store the result.

By upper bound in Proposition \ref{prop:gen_comp} it takes $\mathcal{O}(n\log^2(n))$ comparisons for \textproc{HypercubeSort} to sort $n$ element array. Unfortunately, this bound cannot be improved by a more detailed analysis as the proof of the following proposition shows.
\begin{proposition}
    \label{prop:comp}
    To sort $n$ elements the algorithm \textproc{HypercubeSort} makes at most $\mathcal{O}(n\log^2(n))$ comparisons.
\end{proposition}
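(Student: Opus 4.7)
My plan is to read off the relevant graph parameters of the hypercube DAG and plug them into the general bound of Proposition~\ref{prop:gen_comp}. Concretely, to sort $n$ elements I would set $k=\lceil\log_2 n\rceil$ and run \textproc{DAGSort} on the $k$-dimensional hypercube $DAG(S)$ with $|S|=k$, padding the input array to size $N=2^k$ with the sentinel value $\infty$. Note that $n\le N \le 2n$, so any bound phrased in terms of $N$ and $k$ is also a bound in terms of $n$ and $\log n$ up to constants.

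Next I would compute the three parameters that enter Proposition~\ref{prop:gen_comp} for the hypercube. A vertex of cardinality $m$ has exactly $m$ previous neighbours (remove one element from the subset) and exactly $k-m$ next neighbours (add one element), so $D_{in}=D_{out}=k$, with both maxima realised (by the sink and the source respectively). Every simple path from the source $s=\emptyset$ strictly increases cardinality, so its length is at most $k$; taking any maximal chain of subsets $\emptyset\subset\{s_1\}\subset\{s_1,s_2\}\subset\cdots\subset S$ shows $L=k$.

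Substituting into Proposition~\ref{prop:gen_comp} gives the comparison count bound
\begin{equation*}
    N\cdot L\cdot (D_{in}+D_{out}) \;=\; 2^k\cdot k\cdot 2k \;=\; 2\cdot 2^k\cdot k^2.
\end{equation*}
Since $2^k\le 2n$ and $k\le \log_2 n+1$, this is at most $4n(\log_2 n+1)^2 = \mathcal{O}(n\log^2 n)$, which is the desired bound.

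There is no real obstacle here; the only thing to be careful about is the passage from $n$ to the padded size $N=2^{\lceil\log_2 n\rceil}$, so that the underlying DAG genuinely has the number of vertices we claim. The paper's remark that this estimate is asymptotically tight is a separate matter; for the stated $\mathcal{O}$-bound the argument above suffices.
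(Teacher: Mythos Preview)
Your argument is correct: plugging the hypercube parameters $L=D_{in}=D_{out}=k$ into Proposition~\ref{prop:gen_comp} and handling the padding to $N=2^{\lceil\log_2 n\rceil}$ cleanly gives the $\mathcal{O}(n\log^2 n)$ bound. The paper itself remarks, just before the proposition, that this coarse application of Proposition~\ref{prop:gen_comp} already yields the bound; its proof then takes a different, finer route. It computes the worst-case insertion cost exactly, summing $\frac{1}{2}i(i+1)$ over all $\binom{k}{i}$ vertices of cardinality $i$, and evaluates $\sum_i \binom{k}{i}i(i+1)$ by differentiating $x(1+x)^k$ twice and setting $x=1$, obtaining the closed form $k2^k + k(k-1)2^{k-2}$. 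Your approach is shorter and fully adequate for the stated $\mathcal{O}$-claim; the paper's computation buys an explicit leading constant and, more to the point of the surrounding discussion, shows that the $n\log^2 n$ order does not drop under a vertex-by-vertex accounting, supporting the claim that the coarse bound is asymptotically tight.
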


\begin{proof}
    We assume that $n$ is a power of two. By symmetry between insertion and deletion operations, it is enough to estimate the number of comparisons to insert $n = 2^k$ elements. Denote this number by $T(n)$.
    To insert one vertex of cardinality $i$ it takes in the worst case $i + (i - 1) + \dots +  1 = \frac{1}{2}(i + 1)i$ comparisons. There are exactly $\binom{k}{i}$ vertices of cardinality $i$. Thus
    \begin{equation}
    \label{eq:t_n}
        T(n) = \sum_{i = 0}^{k} \frac{1}{2}\binom{k}{i}(i + 1)i
    \end{equation}

    To evaluate the above summation, consider the function $f$ defined by
    \begin{equation*}
        f(x) = x(1 + x)^k = \sum_{i = 0}^{k} \binom{k}{i}x^{i + 1}
    \end{equation*}
    Take derivatives of the two expressions of $f$:
    \begin{equation*}
        f'(x) = (1 + x)^k + kx(1 + x)^{k - 1} = \sum_{i = 0}^{k} \binom{k}{i}(i + 1)x^i
    \end{equation*}
    And the second derivatives:
    \begin{equation*}
        f''(x) = k(1 + x)^{k - 1} + k(1 + x)^{k - 1} + k(k - 1)x(1 + x)^{k - 2} = \sum_{i = 1}^{k} \binom{k}{i}(i + 1)ix^{i - 1}
    \end{equation*}
    By comparing right-hand side of the last equation with equation (\ref{eq:t_n}) we see that $T(n) = \frac{1}{2}f''(1)$. But we can evaluate $f''(1)$ using formula on the left-hand side of the above equation with $x = 1$. The precise expression we get is
    \begin{equation*}
        k2^k + k(k - 1)2^{k-2}.
    \end{equation*}
    which is clearly $\mathcal{O}(2^kk^2)$.
\end{proof}

\bibliography{/home/mgudim/Documents/ordered_dags/ordered_dags}{}
\bibliographystyle{plain}

\end{document}